\documentclass[journal]{IEEEtran}
\IEEEoverridecommandlockouts
\overrideIEEEmargins
\usepackage{color}
\usepackage{graphicx}
\usepackage[dvips]{epsfig}
\usepackage{graphics} 
\usepackage{times} 
\usepackage[cmex10]{amsmath} 
\usepackage{amssymb}  
\usepackage{cite}
\usepackage{multirow}

\def\inftyn #1{\left\|#1\right\|_{\infty}}
\def\twon #1{\left\|#1\right\|_2}
\def\onen #1{\left\|#1\right\|_1}
\def\zeron #1{\left\|#1\right\|_0}
\def\frobn #1{\left\|#1\right\|_{\text{F}}}
\def\sgn #1{\text{sgn}#1}
\def\abs #1{\left|#1\right|}
\def\inp #1{\left\langle#1\right\rangle}

\def\rainfty{\rightarrow\infty}

\def\bR{\mathbb{R}}

\def\bib{\emph{\textbf{b}}}
\def\biA{\emph{\textbf{A}}}
\def\biB{\emph{\textbf{B}}}
\def\biPhi{{\Phi}}
\def\bix{\emph{\textbf{x}}}
\def\bip{\emph{\textbf{p}}}
\def\biz{\emph{\textbf{z}}}
\def\bie{\emph{\textbf{e}}}
\def\biw{\emph{\textbf{w}}}
\def\biv{\emph{\textbf{v}}}
\def\biu{\emph{\textbf{u}}}
\def\biy{\emph{\textbf{y}}}
\def\biI{\emph{\textbf{I}}}
\def\biH{\emph{\textbf{H}}}
\def\biAx{\emph{\textbf{A}}\emph{\textbf{x}}}

\def\cF{\mathcal{F}}

\def\cL{\mathcal{L}}

\def\cW{\mathcal{W}}

\def\bee{\begin{equation}}
\def\ene{\end{equation}}

\def\beq{\begin{eqnarray}}
\def\enq{\end{eqnarray}}
\def\lentwo{\setlength\arraycolsep{2pt}}

\newtheorem{lem}{Lemma}
\newtheorem{rem}{Remark}

\newtheorem{thm}{Theorem}

\def\equ #1{\begin{equation}#1\end{equation}}
\def\equa #1{\begin{eqnarray}#1\end{eqnarray}}
\def\sbra #1{\left(#1\right)}
\def\mbra #1{\left[#1\right]}
\def\lbra #1{\left\{#1\right\}}

\def\bp{\left(\text{BP}\right)}
\def\bpdn{\left(\text{BP}_\epsilon\right)}
\def\qp{\left(\text{QP}_\lambda\right)}
\def\bpo{\left(\text{BP}^o\right)}

\def\sp{\left(SP\right)}
\def\spx{\left(SP_1\right)}
\def\spe{\left(SP_2\right)}


\title{Orthonormal Expansion $\ell_1$-Minimization Algorithms for Compressed Sensing}
\author{Zai Yang, Cishen Zhang, Jun Deng, and Wenmiao Lu

\thanks{To appear in IEEE Transactions on Signal Processing. Copyright (c) 2011 IEEE. Personal use of this material is permitted. However, permission to use this material for any other purposes must be obtained from the IEEE by sending a request to pubs-permissions@ieee.org.}

\thanks{Z. Yang, J. Deng and W. Lu are with the School of Electrical and Electronic Engineering,
Nanyang Technological University, 639798, Singapore (e-mail: yang0248@e.ntu.edu.sg; de0001un@e.ntu.edu.sg; wenmiao@ntu.edu.sg).

C. Zhang is with the Faculty of Engineering and Industrial Sciences, Swinburne University of Technology, Hawthorn VIC 3122, Australia (e-mail: cishenzhang@swin.edu.au).}}
\begin{document}
\maketitle

\begin{abstract}
Compressed sensing aims at reconstructing sparse signals from
significantly reduced number of samples, and a popular reconstruction approach is $\ell_1$-norm minimization. In this correspondence, a method called orthonormal expansion is presented to reformulate the basis pursuit problem for noiseless compressed sensing. Two algorithms are proposed based on convex optimization: one exactly solves the problem and the other is a relaxed version of the first one. The latter can be considered as a modified iterative soft thresholding algorithm and is easy to implement. Numerical simulation shows that, in dealing with noise-free measurements of sparse signals, the relaxed version is accurate, fast and competitive to the recent state-of-the-art algorithms. Its practical application is demonstrated in a more general case where signals of interest are approximately sparse and measurements are contaminated with noise.
\end{abstract}
\begin{IEEEkeywords}
Augmented Lagrange multiplier, Compressed sensing, $\ell_1$
minimization, Orthonormal expansion, Phase transition,
Sparsity-undersampling tradeoff.
\end{IEEEkeywords}
\section{Introduction}
\IEEEPARstart{C}{ompressed} sensing (CS) aims at reconstructing a
signal from its significantly reduced measurements with a priori knowledge
that the signal is (approximately) sparse
\cite{candes2006robust,candes2007sparsity,donoho2006compressed}. In CS, the signal $\bix^o\in\bR^N$ of interest is acquired by taking measurements of the form
\[\bib=\biAx^o+\bie,\]
where $\biA\in\bR^{n\times N}$ is the sampling matrix, $\bib\in\bR^{n}$ is the vector of our measurements or samples, $\bie\in\bR^{n}$ is the measurement noise, with $N$ and $n$ being sizes of the signal and acquired samples, respectively. A standard approach to reconstructing the original signal $\bix^o$ is to solve
\[\bpdn\quad\min_\bix\onen{\bix}, \text{ subject to }\twon{\bib-\biAx}\leq\epsilon,\]
which is known as the basis pursuit denoising (BPDN) problem, with $\twon{\bie}\leq\epsilon$.
Another frequently discussed approach is to solve the problem in its Lagrangian form
\[\qp\quad\min_\bix\lbra{\lambda\onen{\bix}+\frac{1}{2}\twon{\bib-\biAx}^2}.\]
It follows from the knowledge of convex optimization that $\bpdn$ and $\qp$ are equivalent with appropriate choices of $\epsilon$ and $\lambda$. In general, $\lambda$ decreases as $\epsilon$ decreases. In the limiting case of $\lambda,\epsilon\rightarrow0$, both $\bpdn$ and $\qp$ converges to the following basis pursuit (BP) problem in noiseless CS:
\[\bp\quad\min_\bix\onen{\bix}, \text{ subject to }\biAx=\bib.\]

It is important to develop accurate and computationally efficient algorithms to deal with the $\ell_1$ minimization problems of high dimensional signals in CS, such as an image of $512\times512$ pixels. One popular approach for solving $\qp$ is the iterative soft thresholding (IST) algorithm of the form (stating from $\bix_0=0$) \cite{daubechies2004iterative,bredies2008linear}
\equa{\bix_{t+1}=S_{\lambda}\sbra{\bix_t+\biA'\biz_t},\quad \biz_{t}=\bib-\biAx_{t},\label{formula:standardIST}}
where $'$ denotes the transpose operator and $S_{\lambda}\sbra{\cdot}$ is the soft thresholding operator with a threshold $\lambda$, which will be defined more precisely in Section \ref{section_preliminaries}. IST has a concise form and is easy to implement, but its convergence can be very slow in some cases \cite{bredies2008linear}, especially for small $\lambda$. Other algorithms proposed to solve $\qp$ include interior-point method \cite{kim2008interior}, conjugate gradient method\cite{lustig2007sparse} and fixed-point continuation\cite{hale2007fixed}. Few algorithms can accurately solve large-scale BPDN problem $\bpdn$ with a low computational complexity. The $\ell_1$-magic package\cite{candes-l1} includes a primal log barrier code solving $\bpdn$, in which the conjugate gradient method may not find a precise Newton step in the large-scale mode. NESTA\cite{becker2009nesta} approximately solves $\bpdn$ based on Nesterov's smoothing technique\cite{nesterov2005smooth}, with continuation.

In the case of strictly sparse signals and noise-free measurements, many fast algorithms have been proposed to exactly reconstruct $\bix^o$. One class of algorithms uses the greedy pursuit method, which iteratively refines the support and entries of a sparse solution to yield a better approximation of $\bix^o$, such as OMP\cite{tropp2007signal}, StOMP\cite{donoho2006sparse} and CoSaMP\cite{needell2009cosamp}. These algorithms, however, may not produce satisfactory sparsity-undersampling tradeoff compared with $\bp$ because of their greedy operations. As mentioned before, $\qp$ is equivalent to $\bp$ as $\lambda\rightarrow0$. Hence, $\bp$ can be solved with high accuracy using algorithms for $\qp$ by setting $\lambda$ to a small value, e.g. $1\times10^{-6}$. IST has attracted much attention because of its simple form. In the case where $\lambda$ is small, however, the standard IST in (\ref{formula:standardIST}) can be very slow. To improve its speed, a fixed-point continuation (FPC) strategy is exploited \cite{hale2007fixed}, in which $\lambda$ is decreased in a continuation scheme and a $q$-linear convergence rate is achieved. Further, FPC-AS \cite{wen2010fast} is developed to improve the performance of FPC by introducing an active set, inspired by greedy pursuit algorithms. An alternative approach to improving the speed of IST is to use an aggressive continuation, which takes the form
\equa{\bix_{t+1}=S_{\lambda_t}\sbra{\bix_t+\biA'\biz_t},\quad \biz_{t}=\bib-\biAx_{t},\label{formula:greedyIST}}
where $\lambda_t$ may decrease in each iteration. The algorithm of this form typically has a worse sparsity-undersampling tradeoff than $\bp$\cite{maleki2010optimally}. Such a disadvantage is partially overcome by approximately message passing (AMP) algorithm\cite{donoho2009message}, in which a modification is introduced in the current residual $\biz_t$:
\equa{\bix_{t+1}=S_{\lambda_t}\sbra{\bix_t+\biA'\biz_t},\quad \biz_{t}=\bib-\biAx_{t}+\frac{N\zeron{\bix_t}}{n}\biz_{t-1},\label{formula:amp}}
where $\zeron{\bix}$ counts the number of nonzero entries of $\bix$. It is noted that AMP having the same spasity-undersampling tradeoff as $\bp$ is only established based on heuristic arguments and numerical simulations. Moreover, it cannot be easily extended to deal with more general complex-valued sparse signals, though real-valued signals are only considered in this correspondence.

This correspondence focuses on solving the basis pursuit problem $\bp$ in noiseless CS. We assume that $\biA\biA'$ is an identity matrix, i.e., the rows of $\biA$ are orthonormal. This is reasonable since most fast transforms in CS are of this form, such as discrete cosine transform (DCT), discrete Fourier transform (DFT) and some wavelet transforms, e.g. Haar wavelet transform. Such an assumption has also been used in other algorithms, e.g. NESTA. A novel method called orthonormal expansion is introduced to reformulate $\bp$ based on this assumption. The exact OrthoNormal Expansion $\ell_1$ minimization (eONE-L1) algorithm is then proposed to exactly solve $\bp$ based on the augmented Lagrange multiplier (ALM) method, which is a convex optimization method.

The relaxed ONE-L1 (rONE-L1) algorithm is further developed to simplify eONE-L1. It is shown that rONE-L1 converges at least exponentially and is in the form of modified IST in (\ref{formula:greedyIST}). In the case of strictly sparse signals and noise-free measurements, numerical simulations show that rONE-L1 has the same sparsity-undersampling tradeoff as $\bp$ does. Under the same conditions, rONE-L1 is compared with state-of-the-art algorithms, including FPC-AS, AMP and NESTA. It is shown that rONE-L1 is faster than AMP and NESTA when the number of measurements is just enough to exactly reconstruct the original sparse signal using $\ell_1$ minimization. In a general case of approximately sparse signals and noise-contaminated measurements, where AMP is omitted for its poor performance, an example of 2D image reconstruction from its partial-DCT measurements demonstrates that rONE-L1 outperforms NESTA and FPC-AS in terms of computational efficiency and reconstruction quality, respectively.

The rest of the correspondence is organized as follows. Section
\ref{section_algorithms} introduces the proposed exact and relaxed ONE-L1
algorithms followed by their implementation details.
Section \ref{section:simulation} reports the efficiency of our algorithm through numerical simulations in both noise-free and noise-contaminated cases. Conclusions are drawn in Section \ref{section_conclusion}.

\section{ONE-L1 Algorithms}\label{section_algorithms}
\subsection{Preliminary: Soft Thresholding Operator}\label{section_preliminaries}

For $w\in \bR$, the soft thresholding of $w$ with a threshold
$\lambda \in \bR^+$ is defined as:
\begin{equation}S_\lambda (w)=\sgn\sbra{w}\cdot(\abs{w}-\lambda)^+,\notag
\end{equation}
where $(\cdot)^+=\max(\cdot,0)$ and
\begin{equation}
\sgn\sbra{w}=\left\{
\begin{array}{ll}w/\abs{w},&w\neq0;\\0,&w=0.
\end{array}\right.\notag
\end{equation}
The operator $S_\lambda(\cdot)$ can be extended to vector variables
by its element-wise operation.

The soft thresholding operator can be applied to solve the
following $\ell_1$-norm regularized least square problem \cite{daubechies2004iterative}, i.e.,
\equ{S_\lambda (\biw)=\arg\min_\biv \lbra{\lambda
\onen{\biv}+\frac{1}{2}\twon{\biw-\biv}^2}.\label{soft_threshold}}
where $\biv,\biw\in \bR^n$, and $S_\lambda (\biw)$ is the unique minimizer.

\subsection{Problem Description}
Consider the $\ell_1$-minimization problem $\bp$ with the sampling matrix $\biA$ satisfying that $\biA\biA'=\biI$, where $\biI$ is an identity matrix. We call that $\biA$ is a partially orthonormal matrix hereafter as its rows are usually randomly selected from an orthonormal matrix in practice, e.g. partial-DCT matrix. Hence, there exists
another partially orthonormal matrix $\biB\in\bR^{(N-n)\times N}$,
whose rows are orthogonal to those of $\biA$, such that
$\biPhi=\begin{bmatrix}\biA\\\biB\end{bmatrix}$
is orthonormal. Let $\bip=\biPhi \bix$. The problem $\bp$ is then equivalent to
\[\bpo\quad\min_{\bix,\bip,\Gamma(\bip)=\bib}\onen{\bix}, \text{ subject to }\biPhi \bix=\bip,\]
where $\Gamma(\bip)$ is an operator projecting the vector $\bip$ onto its first $n$ entries.

In $\bpo$, the sampling matrix $\biA$ is expanded into an orthonormal matrix $\biPhi$. It corresponds to the scenario where the full sampling is carried out with the sampling matrix $\biPhi$ and $\bip$ is the vector containing all measurements. Note that only $\bib$, as a part of $\bip$, is actually observed. To expand the sampling matrix $\biA$ into an orthonormal matrix $\biPhi$ is a key step to show that the ALM method exactly solves $\bpo$ and, hence, $\bp$. The next subsection describes the proposed algorithm, referred to as orthonormal expansion $\ell_1$-minimization.

\subsection{ALM Based ONE-L1 Algorithms}

In this subsection we solve the $\ell_1$-minimization problem $\bpo$ using the ALM method\cite{nocedal1999numerical,bertsekas1982constrained}. The ALM method is similar to the quadratic penalty method except an additional Lagrange multiplier term. Compared with the quadratic penalty method, ALM method has some salient properties, e.g. the ease of parameter tuning and the convergence speed. The augmented Lagrangian function is
\equ{\cL(\bix,\bip,\biy,\mu)=\onen{\bix}+\inp{\bip-\biPhi \bix,\biy}+\frac{\mu}{2}\twon{\bip-\biPhi \bix}^2, \label{formula_lagrange_function}}
where Lagrange multiplier $\biy\in\bR^N$, $\mu\in\bR^+$ and $\inp{\biu,\biv}=\biu'\biv\in\bR$
is the inner product of $\biu,\textrm{ }\biv\in\bR^N$. Eq. (\ref{formula_lagrange_function}) can be expressed as follows:
\equ{\cL(\bix,\bip,\biy,\mu)=\onen{\bix}+\frac{\mu}{2}\twon{\bip-\biPhi \bix+\mu^{-1}\biy}^2-\frac{1}{2\mu}\twon{\biy}^2. \label{formula_lagrange_variation}}
Subsequently, we have the following optimization problem $\sp$:
\[\sp\quad\min_{\bix,\bip,\Gamma(\bip)=\bib}\cL(\bix,\bip,\biy,\mu).\]
Instead of solving $\sp$, let us consider the two related problems
\[\spx\quad\min_\bix \cL(\bix,\bip,\biy,\mu),\] and
\[\spe\quad\min_{\bip,\Gamma(\bip)=\bib} \cL(\bix,\bip,\biy,\mu).\]
Note that problem $\spx$ is similar to the $\ell_1$-regularized problem in (\ref{soft_threshold}). In general, $\spx$ cannot be directly solved using the soft thresholding operator as that in (\ref{soft_threshold}) since there is a matrix product of $\Phi$ and $\bix$ in the term of $\ell_2$-norm. However, the soft thresholding operator does apply to the special case where $\biPhi$ is orthonormal. Given $\twon{\biPhi \biu}=\twon{\biu}$ for any $\biu\in\bR^N$, we can apply the soft thresholding to obtain
\equ{S_{\mu^{-1}}\sbra{\biPhi'\sbra{\bip+\mu^{-1}\biy}}=\arg\min_\bix \cL(\bix,\bip,\biy,\mu). \label{formula_update_x}}
 To solve $\spe$, we let $\partial_{\overline{\Gamma}(\bip)} \cL(\bix,\bip,\biy,\mu)=0$ to obtain $\overline{\Gamma}(\bip)=\overline{\Gamma}\sbra{\biPhi \bix-\mu^{-1}\biy}$,
i.e.,
\equ{\mbra{\lentwo{\begin{array}{c}\bib\\\overline{\Gamma}\sbra{\biPhi \bix-\mu^{-1}\biy}\end{array}}}=\arg\min_{\bip,\Gamma(\bip)=b} \cL(\bix,\bip,\biy,\mu), \label{formula_update_p}}
where $\overline\Gamma\sbra{\cdot}$ is the operator projecting the variable to its last $N-n$ entries. As a result, an iterative solution of $\sp$ is stated in the following Lemma \ref{lemma_alternating_sp}.

\begin{lem} For fixed $\biy$ and $\mu$, the iterative algorithm given by
\equa{
&& \bix^{j+1}=S_{\mu^{-1}}\sbra{\biPhi'\sbra{\bip^j+\mu^{-1}\biy}}, \label{formula_sp_x_iter}\\
&& \bip^{j+1}=\mbra{\lentwo{\begin{array}{c}\bib\\\overline{\Gamma}\sbra{\biPhi \bix^{j+1}-\mu^{-1}\biy}\end{array}}} \label{formula_sp_p_iter}}
converges to an optimal solution of $\sp$. \label{lemma_alternating_sp}
\end{lem}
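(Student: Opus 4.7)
The iteration (\ref{formula_sp_x_iter})--(\ref{formula_sp_p_iter}) is exactly two-block alternating minimization of the convex augmented Lagrangian $\cL(\cdot,\cdot,\biy,\mu)$ with $\biy$ and $\mu$ held fixed: one block is $\bix\in\bR^N$, the other is the free part $\overline{\Gamma}(\bip)\in\bR^{N-n}$ of $\bip$ (the $\Gamma(\bip)=\bib$ block being pinned by the constraint). I would organize the proof in four steps.

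First, I would verify that (\ref{formula_sp_x_iter}) and (\ref{formula_sp_p_iter}) are the unique minimizers of $\spx$ and $\spe$. Using orthonormality of $\biPhi$, the squared-norm term in $\cL$ satisfies $\twon{\bip-\biPhi \bix+\mu^{-1}\biy}^2 = \twon{\biPhi'(\bip+\mu^{-1}\biy)-\bix}^2$, which reduces $\spx$ to the canonical problem (\ref{soft_threshold}) with threshold $\mu^{-1}$ and data $\biw=\biPhi'(\bip+\mu^{-1}\biy)$; the unique minimizer is precisely $S_{\mu^{-1}}(\biw)$. For $\spe$, the objective is a strictly convex quadratic in the free entries $\overline{\Gamma}(\bip)$, and setting its gradient to zero gives the closed form (\ref{formula_sp_p_iter}).

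Second, because each substep exactly minimizes $\cL$ over its block, the sequence $\cL(\bix^j,\bip^j,\biy,\mu)$ is non-increasing. The completed-square form (\ref{formula_lagrange_variation}) shows $\cL\geq -\frac{1}{2\mu}\twon{\biy}^2$, so the sequence converges to some finite value $\cL^*$. Coercivity of $\cL$ in $(\bix,\bip)$ (the quadratic controls $\bip$ and, using $\twon{\biPhi \bix}=\twon{\bix}$ together with $\onen{\bix}\geq 0$, also controls $\bix$) implies $\lbra{(\bix^j,\bip^j)}$ is bounded and thus has convergent subsequences. Third, I would pass to the limit: continuity of $S_{\mu^{-1}}$ and of the linear map in (\ref{formula_sp_p_iter}) gives, for any subsequential limit $(\bix^*,\bip^*)$, that $\bix^*$ minimizes $\cL(\cdot,\bip^*,\biy,\mu)$ over $\bix$ and $\bip^*$ minimizes $\cL(\bix^*,\cdot,\biy,\mu)$ subject to $\Gamma(\bip^*)=\bib$. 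Because $\cL$ is jointly convex in $(\bix,\bip)$ (a convex $\ell_1$-norm plus a convex quadratic), blockwise optimality is equivalent to $0\in\partial_{\bix,\bip}\cL$ under the affine constraint, i.e.\ the KKT conditions for $\sp$. Hence $(\bix^*,\bip^*)$ is a global minimizer, and the monotone convergence of $\cL(\bix^j,\bip^j,\biy,\mu)$ forces convergence to the optimum of $\sp$.

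The main obstacle is the non-smoothness of the $\ell_1$ term, which prevents a direct appeal to classical smooth alternating-minimization theorems. What rescues the argument is that each block-subproblem still admits a unique minimizer (the soft-thresholding operator is single-valued; the quadratic block is strictly convex), so the standard Zangwill-type convergence analysis for convex coordinate descent still applies. Careful justification of the limit passage through $S_{\mu^{-1}}$, and the step in which blockwise optimality upgrades to joint optimality via the joint convexity of $\cL$, is where I would spend the most attention.
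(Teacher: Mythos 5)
Your skeleton (monotone descent, boundedness, subsequential limits, blockwise optimality at the limit, then global optimality) matches the paper's, but the two decisive steps are handled differently, and one of them leaves a real gap. For the optimality step the paper eliminates $\bip$ to reduce $\sp$ to a single-variable $\ell_1$-regularized problem and invokes \cite[Proposition 3.10]{daubechies2004iterative} (a fixed point of the thresholded iteration minimizes that objective); you instead argue directly that a blockwise minimizer of $\cL$ is a global minimizer. Your route is more self-contained, but the justification you give --- ``because $\cL$ is jointly convex'' --- is not the right reason and would not survive as stated: a jointly convex nonsmooth function can have coordinatewise minima that are not global minima. What rescues the argument here is that the nonsmooth pieces are block-separable ($\onen{\bix}$ involves only $\bix$, the constraint $\Gamma(\bip)=\bib$ only $\bip$, and the coupling term is a differentiable quadratic), so the joint subdifferential splits into the product of the partial subdifferentials and blockwise stationarity really is joint stationarity. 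You should say this explicitly; it is the crux of your step, not a technicality.

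The genuine gap is whole-sequence convergence. The lemma asserts that the iterates converge; your argument delivers only that the objective values converge to the optimal value and that every accumulation point is an optimal solution. If the optimal set of $\sp$ is not a singleton, nothing in a Zangwill-type analysis prevents the iterates from oscillating between distinct optimal points. The paper closes this with the nonexpansiveness of the soft-thresholding iteration (via \cite[Lemma 2.2]{daubechies2004iterative}): once an accumulation point $\bix_s^*$ is known to be a fixed point, $\twon{\bix^{j_i+q}-\bix_s^*}\leq\twon{\bix^{j_i}-\bix_s^*}\rightarrow0$ for every $q\geq0$, which forces the entire sequence to converge to $\bix_s^*$ (a Fej\'er-monotonicity argument). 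You would need to add this, or an equivalent device, to obtain the statement as claimed.
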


\begin{proof}
 Denote $\cL\sbra{\bix,\bip,\biy,\mu}$ as $\cL\sbra{\bix,\bip}$, for simplicity. By the optimality and uniqueness of $\bix^{j+1}$ and $\bip^{j+1}$, we have $\cL\sbra{\bix^{j+1},\bip^{j+1}}\leq\cL\sbra{\bix^{j},\bip^{j}}$ and the equality holds if and only if $\sbra{\bix^{j+1},\bip^{j+1}}=\sbra{\bix^j,\bip^j}$. Hence, the sequence $\lbra{\cL\sbra{\bix^{j},\bip^{j}}}$ is bounded and converges to
 a constant $L^*$, i.e., $\cL\sbra{\bix^{j},\bip^{j}}\rightarrow L^*$ as $j\rightarrow +\infty$. Since the sequence $\lbra{\bix^j}$ is also bounded by $\onen{\bix^j}\leq\cL\sbra{\bix^j,\bip^j}+\frac{1}{2\mu}\twon{\biy}^2$, there exists a sub-sequence $\lbra{\bix^{j_i}}_{i=1}^{+\infty}$ such that $\bix^{j_i}\rightarrow \bix_s^*$ as $i\rightarrow+\infty$, where $\bix_s^*$ is an accumulation point of $\lbra{\bix^j}$. Correspondingly, $\bip^{j_i}\rightarrow \bip_s^*$, and moreover, $\cL\sbra{\bix_s^*,\bip_s^*}=L^*$.


We now use contradiction to show that $\sbra{\bix_s^*,\bip_s^*}$ is a fixed point of the algorithm.
Suppose that there exist $\sbra{\overline \bix_s,\overline \bip_s}\neq\sbra{\bix_s^*,\bip_s^*}$ such that $\overline \bix_s=\arg\min_\bix\cL\sbra{\bix,\bip_s^*}$ and $\overline \bip_s=\arg\min_{\bip,\Gamma\sbra{\bip}=\bib}\cL\sbra{\bix_s^*,\bip}$, and $\cL\sbra{\overline \bix_s,\overline \bip_s}<L^*$. By (\ref{formula_sp_x_iter}), (\ref{formula_sp_p_iter}) and \cite[Lemma 2.2]{daubechies2004iterative}, we have $\twon{\bix^{j_i+1}-\overline \bix_s}\leq\twon{\bix^{j_i}-\bix_s^*}\rightarrow0$, i.e., $\bix^{j_i+1}\rightarrow\overline \bix_s$, as $i\rightarrow+\infty$. Meanwhile, $\bip^{j_i+1}\rightarrow\overline \bip_s$. Hence, $\cL\sbra{\bix^{j_i+1},\bip^{j_i+1}}\rightarrow\cL\sbra{\overline \bix_s,\overline \bip_s}<L^*$, which contradicts $\cL\sbra{\bix^{j},\bip^{j}}\rightarrow L^*$, resulting in that $\sbra{\bix_s^*,\bip_s^*}$ is a fixed point. Moreover, it follows from $\twon{\bix^{j_i+q}-\bix_s^*}\leq\twon{\bix^{j_i}-\bix_s^*}\rightarrow0$ for any positive integer $q$, that $\bix^j\rightarrow \bix_s^*$, as $j\rightarrow+\infty$.

Note that orthonormal matrix $\biPhi=\begin{bmatrix}\biA \\\biB\end{bmatrix}$ and $\biPhi'\biPhi=\biA'\biA+\biB'\biB=\biI$. We can obtain
\equ{\bix_s^*=S_{\mu^{-1}}\sbra{\bix_s^*+\biA'\sbra{\bib+\mu^{-1}\Gamma(\biy)-\biAx_s^*}}.}
Meanwhile, $\sp$ is equivalent to
\equ{\begin{split}\min_\bix
& \cL\sbra{\bix,\mbra{\lentwo\begin{array}{c}\bib\\\overline\Gamma\sbra{\biPhi \bix-\mu^{-1}\biy}\end{array}}}\\
& =\onen{\bix}+\frac{\mu}{2}\twon{\biAx-\bib-\mu^{-1}\Gamma(\biy)}^2-\frac{1}{2\mu}\twon{\biy}^2.\end{split}\label{formula_sp_simplified}}
By \cite[Proposition 3.10]{daubechies2004iterative}, $\bix_s^*$ is an optimal solution of the problem in (\ref{formula_sp_simplified}) and equivalently, $\sbra{\bix_s^*,\bip_s^*}$ is an optimal solution of $\sp$.
\end{proof}

\begin{rem}~ \label{remark_sp}
\begin{itemize}
 \item[(1)] Lemma 1 shows that to solve problem $\sp$ is equivalent to solve, iteratively, problems $\spx$ and $\spe$.
 \item[(2)]Reference \cite[Proposition 3.10]{daubechies2004iterative} only deals with the special case $\twon{\biA}<1$ and it is, in fact, straightforward to extend the result to arbitrary $\biA$.
     \end{itemize}
\end{rem}

Following from the framework of the ALM method \cite{bertsekas1982constrained} and Lemma \ref{lemma_alternating_sp}, the ALM based ONE-L1 algorithm is outlined in Algorithm 1, where $\sbra{\bix_t^*,\bip_t^*}$ is the optimal solution to $\sp$ in the $t$th iteration and $\biy_t^*$ is the corresponding Lagrange multiplier.

\smallskip

{\flushleft
\footnotesize
\begin{tabular}{@{} r @{ } l @{}}
\hline
\multicolumn{2}{l}{Algorithm 1: Exact ONE-L1 Algorithm via ALM Method}\\
\hline \hline
\multicolumn{2}{l}{Input: Expanded orthonormal matrix $\biPhi$ and observed sample data $\bib$.}\\
1. & $\bix_0^*=\textbf{\emph{0}}$; $\setlength\arraycolsep{2pt}\bip_0^*=\left[\begin{array}{c}\bib\\\emph{\textbf{0}}\end{array}\right]$; $\biy_0^*=\textbf{\emph{0}}$; $\mu_0>0$; $t=0$.\\
2. & while not converged do\\
3. & \quad Lines 4-9 solve $\left(\bix_{t+1}^*,\bip_{t+1}^*\right)=\arg\min_{\left(\bix,\bip,\Gamma(\bip)=\bib\right)}\cL\left(\bix,\bip,\biy_t^*,\mu_t\right)$;\\
4. & \quad $\bix_{t+1}^0=\bix_t^*$, $\bip_{t+1}^0=\bip_t^*$, $j=0$;\\
5. & \quad while not converged do\\
6. & \quad\quad $\bix_{t+1}^{j+1}=S_{\mu_t^{-1}}\left(\biPhi'\left(\bip_{t+1}^j+\mu_t^{-1}\biy_t^*\right)\right)$;\\
7. & \quad\quad $\bip_{t+1}^{j+1}=\mbra{\lentwo{\begin{array}{c}\bib\\\overline{\Gamma}\sbra{\biPhi \bix_{t+1}^{j+1}-\mu_t^{-1}\biy_t^*}\end{array}}}$;\\
8. & \quad\quad set $j=j+1$;\\
9. & \quad end while\\
10.& $\biy_{t+1}^*=\biy_t^*+\mu_t\sbra{\bip_{t+1}^*-\biPhi \bix_{t+1}^*}$;\\
11.& choose $\mu_{t+1}>\mu_t$;\\
12.& set $t=t+1;$\\
13.& end while\\
\multicolumn{2}{l}{Output: $\left(\bix_t^*,\bip_t^*\right)$.}\\
\hline
\end{tabular}
}
\smallskip

The convergence of Algorithm 1 is stated in the following theorem.
\begin{thm}\label{theorem_exact_ONE_L1}
Any accumulation point $\left(\bix^*,\bip^*\right)$ of sequence $\lbra{\left(\bix_t^*,\bip_t^*\right)}_{t=1}^{+\infty}$ of Algorithm 1
is an optimal solution of $\bpo$ and the convergence rate with respect to the outer iteration loop index $t$ is at least $O\sbra{\mu_{t-1}^{-1}}$ in the sense that
\[\abs{\onen{\bix_t^*}-\bix^{\dagger}}=O\left(\mu_{t-1}^{-1}\right),\]
where $\bix^{\dagger}=\|\bix^*\|_1$.
\end{thm}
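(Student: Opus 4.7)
My plan is to exploit a KKT/subgradient relation at the inner-loop minimizer, combine it with the orthonormal structure $\biA\biA'=\biI$ to obtain a uniform a priori bound on the Lagrange multipliers, and then sandwich the suboptimality gap via two subgradient inequalities. First I would write down the first-order stationarity condition for the minimizer $(\bix_{t+1}^*,\bip_{t+1}^*)$ of $\sp$ guaranteed by Lemma \ref{lemma_alternating_sp}: differentiating $\cL(\bix,\bip_{t+1}^*,\biy_t^*,\mu_t)$ in $\bix$ and substituting the multiplier update on line 10 gives
\[\biPhi'\biy_{t+1}^* \in \partial\onen{\bix_{t+1}^*}.\]
A short calculation using (\ref{formula_sp_p_iter}) shows that $\mu_t\overline\Gamma(\bip_{t+1}^*-\biPhi \bix_{t+1}^*)=-\overline\Gamma(\biy_t^*)$, so the last $N-n$ coordinates of $\biy_{t+1}^*$ vanish. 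Writing $\biy_{t+1}^* = \bigl[\tilde{\biy}_{t+1};\,0\bigr]$ with $\tilde{\biy}_{t+1}\in\bR^n$, the stationarity relation becomes $\biA'\tilde{\biy}_{t+1}\in \partial\onen{\bix_{t+1}^*}$ and the multiplier update collapses to the classical ALM recursion $\tilde{\biy}_{t+1} = \tilde{\biy}_t + \mu_t\sbra{\bib - \biAx_{t+1}^*}$ for $\bp$.

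Next I would establish a uniform bound on the multipliers. Any element of $\partial\onen{\bix_{t+1}^*}$ has $\ell_\infty$-norm at most one, so $\inftyn{\biA'\tilde{\biy}_{t+1}} \le 1$; the identity $\biA\biA' = \biI$ then gives $\tilde{\biy}_{t+1} = \biA(\biA'\tilde{\biy}_{t+1})$ and, since $\biA$ has unit spectral norm, $\twon{\tilde{\biy}_{t+1}} \le \twon{\biA'\tilde{\biy}_{t+1}} \le \sqrt{N}$ uniformly in $t$. Let $\hat{\bix}$ be any optimal solution of $\bp$ and $\tilde{\biy}^\dagger$ a corresponding dual optimal (existence follows from LP duality). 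The subgradient inequality for $\onen{\cdot}$ at $\bix_{t+1}^*$, taking $\hat{\bix}$ as test point and using $\biA\hat{\bix} = \bib$, yields
\[\onen{\hat{\bix}} - \onen{\bix_{t+1}^*} \ge \inp{\tilde{\biy}_{t+1},\,\bib - \biAx_{t+1}^*},\]
and the symmetric inequality at $\hat{\bix}$ with $\bix_{t+1}^*$ as test point yields the reverse bound with $\tilde{\biy}^\dagger$ replacing $\tilde{\biy}_{t+1}$. Substituting the residual identity $\bib - \biAx_{t+1}^* = \mu_t^{-1}\sbra{\tilde{\biy}_{t+1} - \tilde{\biy}_t}$ and applying Cauchy--Schwarz with the uniform bound delivers $\abs{\onen{\bix_{t+1}^*} - \onen{\hat{\bix}}} = O\sbra{\mu_t^{-1}}$, which is the claimed rate after reindexing.

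Finally, I would identify every accumulation point as optimal. The residual estimate $\twon{\bib - \biAx_t^*} \le 2\sqrt{N}\,\mu_{t-1}^{-1}\to 0$ forces any accumulation point $\bix^*$ of $\lbra{\bix_t^*}$ to satisfy $\biA\bix^* = \bib$, so that $(\bix^*,\biPhi \bix^*)$ is feasible for $\bpo$; meanwhile the rate bound gives $\onen{\bix^*} = \onen{\hat{\bix}}$, so feasibility combined with matched objective forces optimality, and the constant $\bix^\dagger=\onen{\bix^*}$ coincides with the optimal value of $\bp$. The main obstacle I expect is the first step: Lemma \ref{lemma_alternating_sp} only supplies convergence of the inner iterates to an optimum of $\sp$, so care is required to pass to the limit in (\ref{formula_sp_x_iter})--(\ref{formula_sp_p_iter}) and verify that the stationarity relation holds exactly at the limiting pair $(\bix_{t+1}^*,\bip_{t+1}^*)$ and is inherited by the multiplier update on line 10. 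Once that structural identity is in place, the remainder is a clean specialization of standard augmented-Lagrangian duality to the partially orthonormal setting.
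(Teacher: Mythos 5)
Your argument is correct, and it shares the paper's crucial first step---extracting $\biPhi'\biy_{t+1}^*\in\partial\onen{\bix_{t+1}^*}$ and $\overline\Gamma\sbra{\biy_{t+1}^*}=0$ from the first-order optimality conditions of $\sp$ (which Lemma \ref{lemma_alternating_sp} guarantees hold exactly at the inner-loop limit, so the obstacle you flag at the end is already disposed of) and deducing the uniform bound $\twon{\biy_{t+1}^*}\le\sqrt{N}$---but your two-sided estimate is obtained by a genuinely different mechanism. The paper gets the upper bound $\onen{\bix_{t+1}^*}\le\bix^{\dagger}+O\sbra{\mu_t^{-1}}$ by comparing the value of the augmented Lagrangian at its minimizer with its value $\bix^{\dagger}$ at an optimal feasible point of $\bpo$, rewriting the multiplier and penalty terms as the telescoping quantity $\frac{1}{2\mu_t}\sbra{\twon{\biy_{t+1}^*}^2-\twon{\biy_t^*}^2}$; and it gets the lower bound by observing that $\biPhi'\bip_{t+1}^*$ is itself feasible for $\bpo$, so $\onen{\biPhi'\bip_{t+1}^*}\ge\bix^{\dagger}$, while $\bix_{t+1}^*=\biPhi'\mbra{\bip_{t+1}^*-\mu_t^{-1}\sbra{\biy_{t+1}^*-\biy_t^*}}$ differs from $\biPhi'\bip_{t+1}^*$ by $O\sbra{\mu_t^{-1}}$ in $\ell_1$ norm. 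You instead run two subgradient inequalities---one at $\bix_{t+1}^*$ with slope $\biA'\tilde \biy_{t+1}$ and one at an optimum $\hat \bix$ with slope $\biA'\tilde \biy^{\dagger}$ for a dual optimal $\tilde \biy^{\dagger}$---each combined with the residual identity $\bib-\biAx_{t+1}^*=\mu_t^{-1}\sbra{\tilde \biy_{t+1}-\tilde \biy_t}$ and Cauchy--Schwarz. Both routes are sound and give the same rate. Yours is the classical ALM duality argument, makes the connection to standard augmented-Lagrangian theory transparent, and yields explicit constants; its price is that the lower bound needs dual attainment for $\bp$, which you must (and correctly do) justify via LP duality using the full row rank of $\biA$. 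The paper's lower bound avoids any appeal to a dual optimal solution by exploiting the feasibility of $\biPhi'\bip_{t+1}^*$ for $\bpo$---arguably the one place in this proof where the orthonormal expansion genuinely earns its keep---so the comparison highlights that the expansion can be traded for a duality argument if one prefers.
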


\begin{proof} \label{section_proof_theorem_exact} We first show that the sequence $\lbra{\biy_{t}^*}$ is bounded. By the optimality of $\left(\bix_{t+1}^*,\bip_{t+1}^*\right)$ we have
{\setlength\arraycolsep{2pt}
\equa{0 &\in & \partial _\bix\cL\left(\bix_{t+1}^*,\bip_{t+1}^*,\biy_t^*,\mu_t\right)=\partial\onen{\bix_{t+1}^*}-\biPhi'\biy_{t+1}^*,\notag\\
      0 & = & \partial_{\overline\Gamma\sbra{\bip}}\cL\left(\bix_{t+1}^*,\bip_{t+1}^*,\biy_t^*,\mu_t\right)
      =\overline{\Gamma}\left(\biy_{t+1}^*\right),\notag}
}where $\partial_{\bix}$ denotes the partial differential operator with respect to $\bix$ resulting in a set of subgradients. Hence, $\biPhi'\biy_{t+1}^*\in\partial\onen{\bix_{t+1}^*}$. It follows that $\inftyn{\biPhi'\biy_{t+1}^*}\leq1$ and $\lbra{\biy_{t}^*}$ is bounded. By $\bix^{\dagger}\geq\cL\sbra{\bix_{t+1}^*,\bip_{t+1}^*,\biy_t^*,\mu_t}$,
\equ{\begin{split}
\onen{\bix_{t+1}^*} &=\cL\sbra{\bix_{t+1}^*,\bip_{t+1}^*,\biy_t^*,\mu_t}-\frac{1}{2\mu_t}\sbra{\twon{\biy_{t+1}^*}^2-\twon{\biy_{t}^*}^2}\notag\\
&\leq \bix^{\dagger}-\frac{1}{2\mu_t}\sbra{\twon{\biy_{t+1}^*}^2-\twon{\biy_{t}^*}^2\notag}.
\end{split}}
By $\lbra{\biy_{t}^*}$ is bounded,
\equ{\onen{\bix_{t+1}^*}\leq \bix^{\dagger}+O\sbra{\mu_t^{-1}}.\label{formula_upperbound_onenorm}}
For any accumulation point $\bix^*$ of $\bix_t^*$, without loss of generality, we have $\bix_t^*\rightarrow \bix^*$ as $t\rightarrow+\infty$. Hence, $\onen{\bix^*}\leq \bix^{\dagger}$.
In the mean time, $\bip_{t+1}^*=\biPhi \bix_{t+1}^*+\mu_t^{-1}\sbra{\biy_{t+1}^*-\biy_t^*}\rightarrow \bip^*$ and $\biPhi \bix^*=\bip^*$ result in that $\sbra{\bix^*,\bip^*}$ is an optimal solution to $\bpo$.

Moreover, by $\bix_{t+1}^*=\biPhi'\mbra{\bip_{t+1}^*-\mu_t^{-1}\sbra{\biy_{t+1}^*-\biy_t^*}}$ and
\equ{\bix^{\dagger} =\min_{\biPhi \bix=\bip,\Gamma(\bip)=\bib}\onen{\bix}=\min_{\bip,\Gamma(\bip)=\bib}\onen{\biPhi'\bip}\leq\onen{\biPhi'\bip_{t+1}^*},\notag}
we have $\onen{\bix_{t+1}^*}\geq \bix^{\dagger}-O\sbra{\mu_t^{-1}}$, which establishes the theorem with (\ref{formula_upperbound_onenorm}).
\end{proof}

Algorithm 1 contains, respectively, an inner and an outer iteration loops. Theorem \ref{theorem_exact_ONE_L1} presents only the convergence rate of the outer loop. A natural way to speed up Algorithm 1 is to terminate the inner loop without convergence and use the obtained inner-loop solution as the initialization for the next iteration. This is similar to a continuation strategy and can be realized with reasonably set precision and step size $\mu_t$ \cite{bertsekas1982constrained}. When the continuation parameter $\mu_t$ increases very slowly, in a few iterations, the inner loop can produce a solution with high accuracy. In particular, for the purpose of fast and simple computing, we may update the variables in the inner loop only once before stepping into the outer loop operation. This results in a relaxed version of exact ONE-L1 algorithm (eONE-L1), namely relaxed ONE-L1 algorithm (rONE-L1) outlined in Algorithm 2.

{\flushleft
\footnotesize
\begin{tabular}{@{} c @{ } l @{}}
\hline
\multicolumn{2}{l}{Algorithm 2: Relaxed ONE-L1 Algorithm}\\
\hline \hline
\multicolumn{2}{l}{Input: Expanded orthonormal matrix $\biPhi$ and observed sample data $\bib$.}\\
1. & $\bix_0=\emph{\textbf{0}}$; $\setlength\arraycolsep{2pt}\bip_0=\left[\begin{array}{c}\bib\\\emph{\textbf{0}}\end{array}\right]$; $\biy_0=\emph{\textbf{0}}$; $\mu_0>0$; $t=0$.\\
2. & while not converged do\\
3. & \quad $\bix_{t+1}=S_{\mu_t^{-1}}\left(\biPhi'\left(\bip_{t}+\mu_t^{-1}\biy_t\right)\right)$;\\
4. & \quad $\bip_{t+1}=\mbra{\lentwo{\begin{array}{c}\bib\\\overline{\Gamma}\sbra{\biPhi \bix_{t+1}-\mu_t^{-1}\biy_t}\end{array}}}$;\\
5. & \quad $\biy_{t+1}=\biy_t+\mu_t\sbra{\bip_{t+1}-\biPhi \bix_{t+1}}$;\\
6. & choose $\mu_{t+1}>\mu_t$;\\
7. & set $t=t+1;$\\
8. & end while\\
\multicolumn{2}{l}{Output: $\left(\bix_t,\bip_t\right)$.}\\
\hline
\end{tabular}
}
\smallskip

\begin{thm}\label{theorem_relaxed_ONE_L1}
The iterative solution $(\bix_t,\bip_t)$ of Algorithm 2 converges to a feasible solution $(\bix^f,\bip^f)$ of $\bpo$ if $\sum_{t=1}^{+\infty}\mu_t^{-1}<+\infty$. It converges at least exponentially to $(\bix^f,\bip^f)$ if $\lbra{\mu_t}$ is an exponentially increasing sequence.
\end{thm}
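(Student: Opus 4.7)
The plan is to reduce Algorithm~2 to a recursion driven entirely by the sequence $\lbra{\bix_t}$, after eliminating $\bip_t$ and $\biy_t$ in closed form. Because $\bix_0=\biy_0=\emph{\textbf{0}}$ and line~4 overwrites the first $n$ coordinates of $\bip$ with $\bib$ at every step, an induction on the $\biy$-update in line~5 shows that $\overline\Gamma(\biy_t)=\emph{\textbf{0}}$ for all $t$; consequently $\bip_t=\mbra{\bib;\;\biB\bix_t}$, $\biPhi'\bip_t=\bix_t+\biA'\biz_t$ with $\biz_t:=\bib-\biA\bix_t$, and the $\bix$ update collapses to
\[\bix_{t+1}=S_{\mu_t^{-1}}\sbra{\bix_t+\biA'\biz_t+\mu_t^{-1}\biA'\Gamma(\biy_t)}.\]

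Next I would extract a subgradient $\biv_{t+1}\in\partial\onen{\bix_{t+1}}$ from the soft thresholding step (so $\inftyn{\biv_{t+1}}\leq 1$), apply $\biA$ to the optimality condition and exploit $\biA\biA'=\biI$. A one-line calculation gives the closed-form identities
\[\biz_{t+1}=\mu_t^{-1}\sbra{\biA\biv_{t+1}-\Gamma(\biy_t)},\quad \Gamma(\biy_{t+1})=\biA\biv_{t+1},\]
which immediately yield $\twon{\Gamma(\biy_t)}\leq\sqrt{N}$ for all $t\geq 1$ and therefore $\twon{\biz_{t+1}}\leq 2\sqrt{N}\,\mu_t^{-1}$. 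Substituting these bounds into $\bix_{t+1}-\bix_t=\biA'\biz_t+\mu_t^{-1}\biA'\Gamma(\biy_t)-\mu_t^{-1}\biv_{t+1}$ then delivers $\twon{\bix_{t+1}-\bix_t}=O(\mu_{t-1}^{-1})$.

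Under $\sum_t\mu_t^{-1}<+\infty$ (which forces $\mu_t\rainfty$), the consecutive differences are summable, so $\lbra{\bix_t}$ is Cauchy and converges to some $\bix^f$; setting $\bip^f:=\mbra{\bib;\;\biB\bix^f}$ gives $\bip_t\ra\bip^f$, and the identities $\Gamma(\bip^f)=\bib$, $\biA\bix^f=\lim(\bib-\biz_t)=\bib$, and $\biPhi\bix^f=\mbra{\biA\bix^f;\;\biB\bix^f}=\bip^f$ establish feasibility in $\bpo$. For the exponential rate, when $\lbra{\mu_t}$ grows geometrically we have $\sum_{s\geq t}\mu_s^{-1}=O(\mu_{t-1}^{-1})$, so $\twon{\bix_t-\bix^f}\leq\sum_{s\geq t}\twon{\bix_{s+1}-\bix_s}$ and $\twon{\bip_t-\bip^f}\leq\twon{\biB(\bix_t-\bix^f)}$ both decay at least geometrically.

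The main obstacle is the clean identity $\Gamma(\biy_{t+1})=\biA\biv_{t+1}$: it relies on $\biy_t$ remaining in the row space of $\biA$ at every iteration, which in turn depends on the initialization and on the particular order of the three updates inside each outer step. Once that identity is in hand, the theorem reduces to standard summability arguments with no need for a Lagrangian descent analysis, which is precisely what makes the convergence proof of rONE-L1 elementary in spite of its collapsing the inner loop to a single step.
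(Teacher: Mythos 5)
Your proof is correct and is essentially the paper's argument written in reduced coordinates: your subgradient $\biv_{t+1}$ is exactly the paper's $\biPhi'\hat\biy_{t+1}\in\partial\onen{\bix_{t+1}}$ (so $\Gamma(\biy_{t+1})=\Gamma(\hat\biy_{t+1})=\biA\biv_{t+1}$ is precisely the identity the paper uses, and it does hold by the induction $\overline\Gamma(\biy_t)=\emph{\textbf{0}}$ you describe, so the ``obstacle'' you flag is not one), and your increment bound $\twon{\bix_{t+1}-\bix_t}=O(\mu_{t-1}^{-1})$ plays the role of the paper's $\twon{\bip_{t+1}-\bip_t}=\mu_t^{-1}\twon{\biy_{t+1}-\hat\biy_{t+1}}\leq C\mu_t^{-1}$. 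Both proofs then conclude by the same summability/geometric-tail argument, so no substantive difference remains.
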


\begin{proof} \label{section_proof_theorem_relaxed}
We show first that sequences $\left\{\hat \biy_t\right\}$ and $\left\{\biy_t\right\}$ are bounded, where $\hat \biy_t=\biy_{t-1}+\mu_{t-1}\sbra{\bip_{t-1}-\biPhi \bix_t}$. By the optimality of $\bix_{t+1}$ and $\bip_{t+1}$ we have
{\lentwo\equa{
0 &\in& \partial_\bix\cL\left(\bix_{t+1},\bip_t,\biy_t,\mu_t\right)=\partial\onen{\bix_{t+1}}-\biPhi'\hat \biy_{t+1},\notag\\
0 & = & \partial_{\overline\Gamma\sbra{\bip}}\cL\left(\bix_{t+1},\bip_{t+1},\biy_t,\mu_t\right)\notag
=\overline{\Gamma}\left(\biy_{t+1}\right).\notag}
}Hence, $\inftyn{\biPhi'\hat \biy_{t+1}}\leq1$ and it follows that $\left\{\hat \biy_t\right\}$ is bounded.
Since $\biy_{t+1}=\hat \biy_{t+1}+\mu_t\sbra{\bip_{t+1}-\bip_t}$, we obtain
$\Gamma\sbra{\biy_{t+1}}=\Gamma\sbra{\hat \biy_{t+1}}$. This together with $\overline{\Gamma}\left(\biy_{t+1}\right)=0$
results in $\twon{\biy_{t+1}}\leq\twon{\hat \biy_{t+1}}$ and the boundedness of $\lbra{\biy_t}$. By $\bip_{t+1}-\bip_t=\mu_t^{-1}\sbra{\biy_{t+1}-\hat \biy_{t+1}}$, we have $\twon{\bip_{t+1}-\bip_t}\leq C\mu_t^{-1}$ with $C$ being a constant. Then $\lbra{\bip_t}$ is a Cauchy sequence if $\sum_{t=1}^{+\infty}\mu_t^{-1}<+\infty$, resulting in $\bip_t\rightarrow \bip^f$ as $t\rightarrow+\infty$. In the mean time, $\bix_t\rightarrow \bix^f$, $\biPhi \bix^f=\bip^f$. Hence, $\sbra{\bix^f,\bip^f}$ is a feasible solution of $\bpo$. Suppose that $\lbra{\mu_t}$ is an exponentially increasing sequence, i.e., $\mu_{t+1}=r\mu_t$ with $r>1$. By the boundedness of $\lbra{\biy_t}$ and $\lbra{\hat \biy_t}$ we have
\equ{\begin{split}\twon{\bip_t-\bip^f}
&=\twon{\sum_{i=t}^{+\infty}\sbra{\bip_i-\bip_{i+1}}}\leq\sum_{i=t}^{+\infty}\twon{\bip_i-\bip_{i+1}}\\
&\leq C\mu_t^{-1}\sum_{i=0}^{+\infty}r^{-i}=O\sbra{\mu_t^{-1}}.\end{split}\notag}
Hence, $\lbra{\bip_t}$ converges at least exponentially to $\bip^f$ since $\lbra{\mu_t^{-1}}$ exponentially converges to $0$, and the same result holds for $\lbra{\bix_t}$.
\end{proof}

\begin{rem}
It is shown in Theorem 2 that faster growth of $\lbra{\mu_t}$ can result in faster convergence of $\lbra{\bix_t}$. Intuitively, the reduced number of iterations for the inner loop problem $\sp$ may result in some error from the optimal solution $x^*_t$ of the inner loop. This will likely affect the accuracy of the final solution $x^f$ for $\bp$. Therefore, the growth speed of $\lbra{\mu_t}$ provides a tradeoff between the convergence speed of the algorithm and the precision of the final solution, which will be illustrated in Section \ref{section:simulation} through numerical simulations.
\end{rem}

\subsection{Relationship Between rONE-L1 and IST}

The studies and applications of IST type algorithms have been very active in recent years because of their concise presentations. This subsection considers the relationship between rONE-L1 and IST. Note that $\overline\Gamma\sbra{\biy_t}=0$ in Algorithm 2 and $\biPhi'\biPhi=\biA'\biA+\biB'\biB=\biI$. After some derivations, it can be shown that the rONE-L1 algorithm is equivalent to the following iteration (starting from $\bix_t=0$, as $t\leq0$, and $\biz_t=0$, as $t<0$):
\equa{\begin{split}\bix_{t+1}
&=S_{\lambda_t}\sbra{\bix_t+\biA'\biz_t},\\\biz_{t}
&=\bib-\biA\mbra{\sbra{1+\kappa_t}\bix_t-\kappa_t \bix_{t-1}}+\kappa_t \biz_{t-1},\end{split}\label{formula:rONE_L1}}
where $\lambda_t=\mu_t^{-1}$ and $\kappa_t=\frac{\mu_{t-1}}{\mu_{t}}$. Compared with the general form of IST in (\ref{formula:greedyIST}), one more term $\kappa_t \biz_{t-1}$ is added when computing the current residual $\biz_t$ in rONE-L1. Moreover, a weighted sum $(1+\kappa_t)\bix_t-\kappa_t \bix_{t-1}$ is used instead of the current solution $\bix_t$. It will be shown later that these two changes essentially improve the sparsity-undersampling tradeoff.

\begin{rem}Equations in (\ref{formula:rONE_L1}) show that the expansion from the partially orthonormal matrix $\biA$ to orthonormal $\biPhi$ is not at all involved in the actual implementation and computation of rONE-L1. The same claim also holds for eONE-L1 algorithm. Nevertheless, the orthonormal expansion is a key instrumentation in the derivation and analysis of Algorithms 1 and 2.\label{rem_expansion}
\end{rem}


\subsection{Implementation Details}\label{section_implementation}
As noted in Remark \ref{rem_expansion}, the expansion from $\biA$ to $\biPhi$ is not involved in the computing of ONE-L1 algorithms.
In our implementations, we consider using exponentially increasing $\mu_t$, i.e., fixing $r>1$ and $\mu_{t}=r^t\mu_0$. Let $Q_\alpha(\cdot)$ be an $\alpha$-quantile operator and $\mu_0=1/Q_\alpha\left(\abs{\biA'\bib}\right)$, with $\abs{\cdot}$ applying to the vector variable elementwise, $\mu_0^{-1}$ being the threshold in the first iteration and $\alpha=0.99$. In eONE-L1, a large $r$ can speed up the convergence of the outer loop iteration according to Theorem \ref{theorem_exact_ONE_L1}. However, simulations show that a larger $r$ can result in more iterations in the inner loop. We use $r=1+n/N$ as default. In rONE-L1, the value of $r$ provides a tradeoff between the convergence speed of the algorithm and the precision of the final solution. Our recommendation of $r$ to achieve the optimal sparsity-undersampling tradeoff is $r=\min\sbra{1+0.04n/N,1.02}$, which will be illustrated in Section \ref{section:simulation:tradeoff}.

An iterative algorithm needs a termination criterion. The eONE-L1 algorithm is considered converged if $\frac{\twon{\biAx_t^*-\bib}}{\twon{b}}<\tau_1$ with $\tau_1$ being a user-defined tolerance. The inner iteration is considered converged if $\frac{\twon{\bix_{t}^{j+1}-\bix_{t}^j}}{\twon{\bix_{t}^j}}<\tau_2$. In our implementation, the default values are $\sbra{\tau_1,\tau_2}=\sbra{10^{-5},10^{-6}}$. The rONE-L1 algorithm is considered converged if $\frac{\twon{\biAx_t-\bib}}{\twon{\bib}}<\tau$, with $\tau=10^{-5}$ as default.

\section{Numerical Simulations}\label{section:simulation}

\subsection{Sparsity-Undersampling Tradeoff}\label{section:simulation:tradeoff}
This subsection considers the sparsity-undersampling tradeoff of rONE-L1 in the case of strictly sparse signals and noise-free measurements. Phase transition is a measure of the sparsity-undersampling tradeoff in this case. Let the sampling ratio be $\delta=n/N$ and the sparsity ratio be $\rho=k/n$, where $k$ is a measure of sparsity of $\bix$, and we call that $\bix$ is $k$-sparse if at most $k$ entries of $\bix$ are nonzero. As $k,n,N\rainfty$ with fixed $\delta$ and $\rho$, the behavior of the phase transition of $\bp$ is controlled by $(\delta,\rho)$\cite{donoho2010counting,stojnic2009various}. We denote this theoretical curve by $\rho=\rho_T\sbra{\delta}$, which is plotted in Fig \ref{fig_pt_comparison_IST}.

We estimate the phase transition of rONE-L1 using a Monte Carlo method as in \cite{donoho2009observed,donoho2009message}. Two matrix ensembles are considered, including Gaussian with $N=1000$ and partial-DCT with $N=1024$. Here the finite-$N$ phase transition is defined as the value of $\rho$ at which the success probability to recover the original signal is $50\%$. We consider $33$ equispaced values of $\delta$ in $\lbra{0.02,0.05,\cdots,0.98}$. For each $\delta$, $21$ equispaced values of $\rho$ are generated in the interval $\mbra{\rho_T\sbra{\delta}-0.1, \rho_T\sbra{\delta}+0.1}$. Then $M=20$ random problem instances are generated and solved with respect to each combination of $(\delta,\rho)$, where $n=\lceil\delta N\rceil$, $k=\lceil\rho n\rceil$, and nonzero entries of sparse signals are generated from the standard Gaussian distribution. Success is declared if the relative root mean squared error (relative RMSE) $\frac{\twon{\hat \bix-\bix^o}}{\twon{\bix^o}}<10^{-4}$, where $\hat \bix$ is the recovered signal. The number of success among $M$ experiments is recorded. Finally, a generalized linear model is used to estimate the phase transition as in \cite{donoho2009observed}.

The experiment result is presented in Fig. \ref{fig_pt_comparison_IST}. The observed phase transitions using the recommended value of $r$ strongly agree with the theoretical result of $\bp$. It shows that the rONE-L1 algorithm has the optimal sparsity-undersampling tradeoff in the sense of $\ell_1$ minimization.

\begin{figure}
  \centering
  \includegraphics[height=2.6in, width=3.5in]{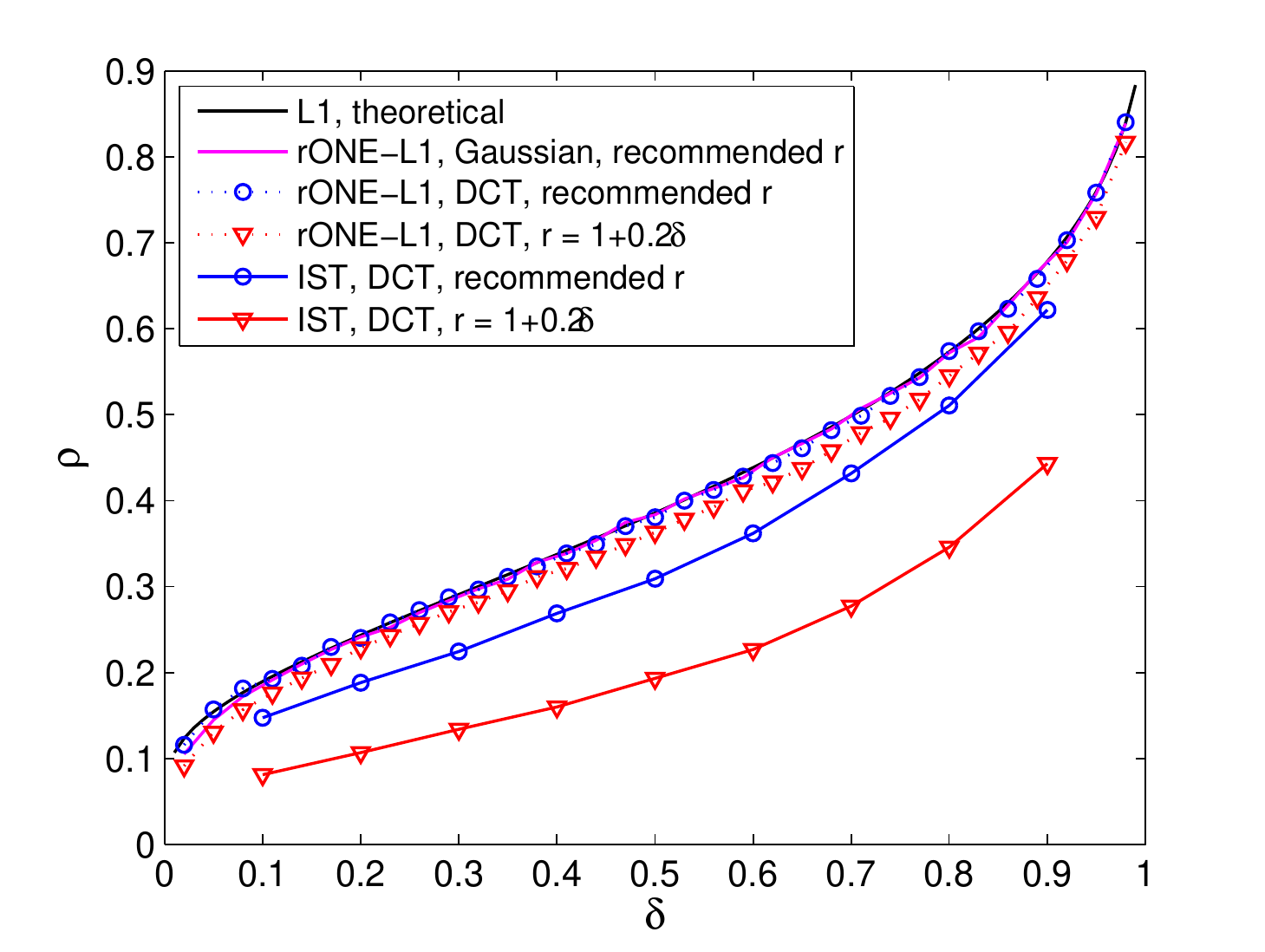}
  \caption{Observed phase transitions of rONE-L1, and comparison with those of IST. Note that, 1) the observed phase transitions of rONE-L1 with the recommended $r$ strongly agree with the theoretical calculation based on $\bp$; 2) rONE-L1 has significantly enlarged success phases compared with IST.}
  \label{fig_pt_comparison_IST}
\end{figure}

\subsection{Comparison with IST}\label{section:simulation:compIST}

The rONE-L1 algorithm can be considered as a modified version of IST in (\ref{formula:greedyIST}). In this subsection we compare the sparsity-undersampling tradeoff and speed of these two algorithms. A similar method is adopted to estimate the phase transition of IST, which is implemented using the same parameter values as rONE-L1. Only nine values of $\delta$ in $\lbra{0.1,0.2,\cdots,0.9}$ are considered with the partial-DCT matrix ensemble for time consideration. Another choice of $r=1+0.2\delta$ is considered besides the recommended one. Correspondingly, the phase transition of rONE-L1 with $r=1+0.2\delta$ is also estimated.

The observed phase transitions are shown in Fig. \ref{fig_pt_comparison_IST}. As a modified version of IST, obviously, rONE-L1 makes a great improvement over IST in the sparsity-undersampling tradeoff. Meanwhile, comparison of the averaged number of iterations of the two algorithms shows that rONE-L1 is also faster than IST. For example, as $\delta=0.2$ and the recommended $r$ is used, rONE-L1 is about 6 times faster than IST.

\subsection{Comparison with AMP, FPC-AS and NESTA in Noise-free Case}

In this subsection, we report numerical simulation results comparing rONE-L1 with state-of-the-art algorithms, including AMP, FPC-AS and NESTA, in the case of sparse signals and noise-free measurements. Our experiments used FPC-AS v.1.21, NESTA v.1.1, and AMP codes provided by the author. We choose parameter values for FPC-AS and NESTA such that each method produces a solution with approximately the same precision as that produced by rONE-L1. In our experiments we consider the recovery of exactly sparse signals from partial-DCT measurements. We set $N=2^{14}$ and $\delta=0.2$, and an `easy' case where $\rho=0.1$ and a `hard' case where $\rho=0.22$ are considered, respectively. \footnote{Here `easy' and `hard' refer to the difficulty degree in recovering a sparse signal from a specific number of measurements. The setting $\sbra{\delta,\rho}=(0.2,0.22)$ is close to the phase transition of $\bp$.} Twenty random problems are created and solved for each algorithm with each combination of $\sbra{\delta,\rho}$, and the minimum, maximum and averaged relative RMSE, number of calls of $\biA$ and $\biA'$, and CPU time usages are recorded. All experiments are carried on Matlab v.7.7.0 on a PC with a Windows XP system and a 3GHz CPU. Default parameter values are used in eONE-L1 and rONE-L1.

\textbf{AMP: }terminating if $\frac{\twon{\biAx_t-\bib}}{\twon{\bib}}<10^{-5}$.

\textbf{FPC-AS: }$\lambda=2\times10^{-6}$ and $gtol=1\times10^{-6}$, where $gtol$ is the termination criterion on the maximum norm of sub-gradient. FPC-AS solves the problem $\qp$.

\textbf{NESTA: }$\lambda=2\times10^{-6}$, $\epsilon=0$ and the termination criterion $tolvar=1\times10^{-8}$. NESTA solves $\bpdn$ using the Nesterov algorithm \cite{nesterov2005smooth}, with continuation.

Our experiment results are presented in Table \ref{table_comparison}. In both `easy' and `hard' cases, rONE-L1 is much faster than eONE-L1. In the `easy' case, the proposed rONE-L1 algorithm takes the most number of calls of $\biA$ and $\biA'$, except that of eONE-L1, due to a conservative setting of $r$. But this number of calls (515.4) is very close to that of NESTA (468.9), and furthermore, the CPU time usage of rONE-L1 (2.14 s) is less than that of NESTA (2.70 s) because of its concise implementation. In the `hard' case, rONE-L1 has the second best performance with significantly less CPU time than that of AMP and NESTA. AMP has the second worst CPU time and the worst accuracy as the dynamic threshold in each iteration depends on the mean squared error of the current iterative solution, which cannot be calculated exactly in the implementation.
\begin{table}
 \caption{Comparison Results of ONE-L1 Algorithms With State-of-the-art Methods}
 \centering
\footnotesize
\begin{tabular}{@{}c|l|l@{ }l|l@{ }l|l@{ }l@{}}
  \hline\hline
  $\rho$ & Method & \multicolumn{2}{c}{\# calls $\biA$ \& $\biA'$}& \multicolumn{2}{|c|}{CPU time (s)} & \multicolumn{2}{c}{Error $(10^{-5})$}\\\hline\hline
\multicolumn{1}{c|}{\multirow{5}{0.4cm}{0.1}} & eONE-L1 & 1819 & \scriptsize{(1522,2054)}$^*$ & 5.62 & \scriptsize{(4.67,6.52)} & 0.42 & \scriptsize{(0.11,0.94)} \\
& rONE-L1 & 515.4 & \scriptsize{(286,954)} & 2.14 & \scriptsize{(1.19,3.92)} & 1.08 & \scriptsize{(0.53,1.30)} \\
 & AMP & 222.7 & \scriptsize{(216,234)} & 0.80 & \scriptsize{(0.76,0.86)} & 1.02 & \scriptsize{(0.85,1.15)} \\
 & FPC-AS & 150.2 & \scriptsize{(135,170)} & 0.50 & \scriptsize{(0.44,0.56)} & 1.13 & \scriptsize{(1.07,1.23)} \\
 & NESTA & 468.9 & \scriptsize{(458,484)} & 2.70 & \scriptsize{(2.55,2.98)} & 1.05 & \scriptsize{(0.99,1.13)} \\\hline\hline
\multicolumn{1}{c|}{\multirow{5}{0.4cm}{0.22}} & eONE-L1 & 9038 & \scriptsize{(7270,11194)} & 28.5 & \scriptsize{(22.0,35.8)} & 1.87 & \scriptsize{(0.46,2.66)} \\
& rONE-L1 & 722.3 & \scriptsize{(440,972)} & 2.61 & \scriptsize{(1.63,3.93)} & 1.80 & \scriptsize{(1.37,3.05)} \\
 & AMP & 1708 & \scriptsize{(1150,2252)} & 6.21 & \scriptsize{(4.19,9.11)} & 10.5 & \scriptsize{(6.96,15.8)} \\
 & FPC-AS & 589.4 & \scriptsize{(476,803)} & 2.10 & \scriptsize{(1.65,2.80)} & 1.96 & \scriptsize{(1.46,3.60)} \\
 & NESTA & 1084 & \scriptsize{(890,1244)} & 6.47 & \scriptsize{(5.22,7.50)} & 2.90 & \scriptsize{(1.62,3.98)} \\\hline\hline
\end{tabular}\label{table_comparison}\\
$^*$Three numbers are averaged, minimum and maximum values, respectively.
\end{table}

\subsection{A Practical Example}
This subsection demonstrates the efficiency of rONE-L1 in the general CS where the signal of interest is approximately sparse and measurements are contaminated with noise. We seek to reconstruct the Mondrian image of size $256\times256$, shown in Fig. \ref{fig_comparison_noisy}, from its noise-contaminated partial-DCT coefficients. This image presents a challenge as its wavelet expansion is compressible but not exactly sparse. The sampling pattern, which is inspired by magnetic resonance imaging (MRI) and is shown in Fig. \ref{fig_comparison_noisy}, is adopted as most energy of the image concentrates at low-frequency components after the DCT transform. The measurement vector $\bib$ contains $n=7419$ DCT measurements ($\delta=0.113$). White Gaussian noise with standard deviation $\sigma=1$ is then added. We set $\epsilon=\sqrt{n+2\sqrt{2n}}\sigma$. Haar wavelet with a decomposition level 4 is chosen as the sparsifying transform $\cW$. Hence, the problem to be solved is $\bpdn$ with $\biA=\cF_p\cW'$, where $\cF_p$ is the partial-DCT transform. The reconstructed image is $\hat \biH=\cW'\hat \bix$ with $\hat \bix$ being the reconstructed wavelet coefficients and reconstruction error is calculated as $\frac{\frobn{\hat \biH-\biH^o}}{\frobn{\biH^o}}$, where $\biH^o$ is the original image and $\frobn{\cdot}$ denotes the Frobenius norm. We compare the performance of rONE-L1 with NESTA and FPC-AS.

\begin{rem}
AMP is omitted for its poor performance in this approximately-sparse-signal case. For AMP, the value of the dynamic threshold $\lambda_t$ and the term $\zeron{x_t}$ in (\ref{formula:amp}) depend on the condition that the signal to reconstruct is strictly sparse.
\end{rem}

In such a noisy measurement case, an exact solution for $\bpdn$ is not sought after in the rONE-L1 simulation. The computation of the rONE-L1 algorithm is set to terminate if $\frac{\twon{\biAx_t-\bib}}{\twon{\bib}}\leq\tau=\frac{\epsilon}{\twon{b}}$, i.e., rONE-L1 outputs the first $\bix_t$ when it becomes a feasible solution of $\bpdn$.

\textbf{FPC-AS: }$\lambda=1\times10^{-3}$, $gtol = 1\times10^{-3}$, $gtol\_scale\_x = 1\times10^{-6}$ and the maximum number of iterations for subspace optimization $sub\_mxitr = 10$. The parameters are set according to \cite[Section 4.4]{wen2010fast}.

\textbf{NESTA: }$\lambda=1\times10^{-4}$, and $tolvar=1\times10^{-6}$. The parameters are tuned to achieve the minimum reconstruction error.

Fig. \ref{fig_comparison_noisy} shows the experiment results where rONE-L1, FPC-AS and NESTA produce faithful reconstructions of the original image. The rONE-L1 algorithm produces a reconstruction error (0.0741) lower than that of FPC-AS (0.0809) with comparable computation times (11.1 s and 11.4 s, respectively). While NESTA results in a slightly lower reconstruction error (0.0649), it incurs about twice more computation time (29.4 s).

%

\begin{figure}
  \centering
  \hfill\includegraphics[height=1.1in]{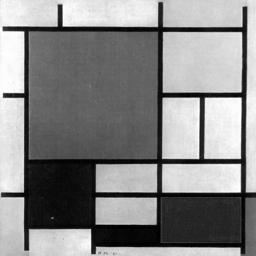}%
  \hfill\includegraphics[height=1.1in]{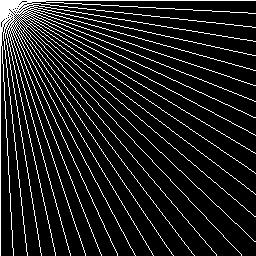}\hspace*{\fill}\vspace*{10pt}

  \hfill\includegraphics[height=1.1in]{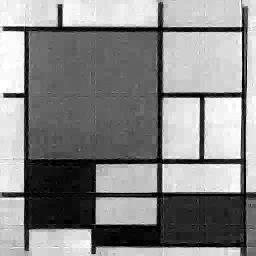}%
  \hfill\includegraphics[height=1.1in]{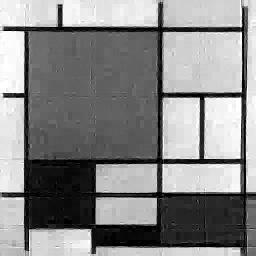}%
  \hfill\includegraphics[height=1.1in]{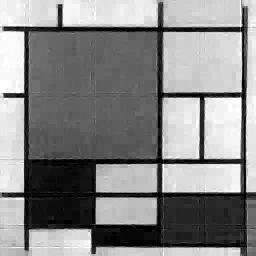}\hspace*{\fill}
  \caption{An example of 2D image reconstruction from noise-contaminated partial-DCT measurements. Upper left: original Mondrian image; upper right: sampling pattern. The lower three are reconstructed images respectively by rONE-L1 (lower left, error: 0.0741, time: 11.1 s), FPC-AS (lower middle, error: 0.0809, time: 11.4 s) and NESTA (lower right, error: 0.0649, time: 29.4 s).}
  \label{fig_comparison_noisy}
\end{figure}

\section{Conclusion}\label{section_conclusion}

In this work, we have presented novel algorithms to solve the basis pursuit problem for noiseless CS. The proposed rONE-L1 algorithm, based on the augmented Lagrange multiplier method and heuristic simplification, can be considered as a modified IST with an aggressive continuation strategy. The following two cases of CS have been studied: 1) exact reconstruction of sparse signals from noise-free measurements, and 2) reconstruction of approximately sparse signals from noise-contaminated measurements. The proposed rONE-L1 outperforms AMP, which is a well-known IST type algorithm, in Case 2 and also in Case 1 when the setting of $\sbra{\delta,\rho}$ is close to the phase transition of basis pursuit. It is faster than NESTA in both Case 1 and Case 2. The numerical experiments further show that rONE-L1 outperforms FPC-AS in Case 2. Apart from the high computational efficiency and accurate reconstruction result, another useful property of rONE-L1 is its ease of parameter tuning. It only needs to set a termination criterion $\tau$ if the recommended $r$ is used and the value of $\tau$ is explicit in Case 2. While this correspondence focuses on reconstruction of real-valued signals, it is straightforward to apply ONE-L1 algorithms to the reconstruction of complex-valued signals \cite{yang2011sparsity}. More rigorous analysis of rONE-L1 is currently under investigation.

\section*{Acknowledgment}
The authors are grateful to the anonymous reviewers for helpful comments. Z. Yang wishes to thank Arian Maleki for providing the AMP codes. The Matlab codes of ONE-L1 algorithms are available at \emph{http://sites.google.com/site/zaiyang0248}.

\bibliographystyle{IEEEtran}
\bibliography{reference}

\end{document}